\newtheorem{definition}{Definition}
\newtheorem{theorem}{Theorem}
\newtheorem{proposition}[theorem]{Proposition}
\newtheorem{lemma}[theorem]{Lemma}
\newenvironment{example}{\myexample}{\qed\endmyexample}
\def\qed{\endIEEEproof}
\newcommand{\Fq}{\mathbb{F}_q}
\newcommand{\calA}{\mathcal{A}}
\newcommand{\calB}{\mathcal{B}}
\newcommand{\calE}{\mathcal{E}}
\newcommand{\calG}{\mathcal{G}}
\newcommand{\calN}{\mathcal{N}}
\newcommand{\calS}{\mathcal{S}}
\newcommand{\calT}{\mathcal{T}}
\newcommand{\calU}{\mathcal{U}}
\newcommand{\calV}{\mathcal{V}}
\DeclareMathOperator{\indegree}{\sf indegree}
\DeclareMathOperator{\outdegree}{\sf outdegree}
\DeclareMathOperator{\tail}{\sf tail}
\DeclareMathOperator{\head}{\sf head}
\DeclareMathOperator{\K}{\sf mincut}
\title{Robust Network Coding in the Presence of Untrusted Nodes}
\author{Da Wang, Danilo Silva and Frank R. Kschischang%
\thanks{The work of D. Wang was supported by NSERC Undergraduate Summer Research Award. The work of D. Silva was supported by CAPES Foundation, Brazil. The material in this paper was presented in part at the 45th Annual Allerton Conference on Communication, Control, and Computing, Monticello, IL, September 2007.}
\thanks{D. Wang was with The Edward S. Rogers Sr. Department of Electrical and Computer Engineering, University of Toronto, Toronto, ON M5S 3G4, Canada. He is now with the Department of Electrical Engineering and Computer Science, Massachusetts Institute of Technology, Cambridge, MA 02139 USA (e-mail: dawang@mit.edu).}
\thanks{D. Silva was with The Edward S. Rogers Sr. Department of Electrical and Computer Engineering, University of Toronto, Toronto, ON M5S 3G4, Canada. He is now with the School of Electrical and Computer Engineering, State University of Campinas, Campinas, SP 13083-970, Brazil (e-mail: danilo@decom.fee.unicamp.br).}
\thanks{F. R. Kschischang is with The Edward S. Rogers Sr. Department of Electrical and Computer Engineering, University of Toronto, Toronto, ON M5S 3G4, Canada (e-mail: frank@comm.utoronto.ca).}
}
\begin{document}
\maketitle
\thispagestyle{empty}

\begin{abstract}
While network coding can be an efficient means of information
dissemination in networks, it is highly susceptible to ``pollution
attacks,'' as the injection of even a single erroneous packet has the
potential to corrupt each and every packet received by a given
destination.  Even when suitable error-control coding is applied, an
adversary can, in many interesting practical situations, overwhelm the
error-correcting capability of the code.  To limit the power of
potential adversaries, a broadcast transformation is introduced, in
which nodes are limited to just a single (broadcast) transmission per
generation.  Under this  broadcast transformation, the multicast
capacity of a network is changed (in general reduced) from the number of edge-disjoint paths between source and sink to the number of
internally-disjoint paths.  Exploiting this fact, a family of
networks is proposed whose capacity is largely unaffected by a broadcast
transformation. This results in a significant achievable transmission
rate for such networks, even in the presence of adversaries.
\end{abstract}

\begin{IEEEkeywords}
adversarial nodes,
broadcast transformation,
error correction,
JLC networks,
multicast capacity,
network coding.
\end{IEEEkeywords}

\section{Introduction}
\label{sec:introduction}

Network coding \cite{Ahlswede++2000} is a promising approach for
efficient information dissemination in packet networks.   Network coding
generalizes routing, allowing nodes in the network not only to switch
packets from input ports to output ports, but also to combine incoming
packets in some manner to form outgoing packets.  For example, in
\emph{linear} network coding, fixed-length packets are regarded as
vectors over a finite field $\Fq$, and nodes in the network form
outgoing packets as $\Fq$-linear combinations of incoming packets. For
the single-source multicast problem, it is known that linear network
coding suffices to achieve the network capacity
\cite{Li++2003,Koetter.Medard2003}.

Recently the problem of error correction in network coding has received
significant attention due to the fact that pollution attacks can be
catastrophic. Indeed, the injection of even a single erroneous packet
somewhere in the network has the potential to corrupt each and every
packet received by a given sink node.  This problem was first
investigated from an edge-centric perspective \cite{Cai.Yeung2002},
where a number of packet errors could arise in any of the links in the
network. Alternatively, under a node-centric perspective, it is assumed
that an adversarial node may join the network and transmit corrupt
packets on all its outgoing links, but the other links in the network
remain free of error.

One approach, investigated in \cite{Charles++2006,Zhao++2007}, for
dealing with the pollution problem is to apply cryptographic techniques
to ensure the validity of received packets, permitting corrupted packets
to be discarded by each node, and therefore preventing the contamination
of other packets.  This approach typically requires the use of large
field and packet sizes, which leads to computationally expensive
operations at the nodes and possibly to significant transmission delay.
These requirements may be acceptable in the large-file-downloading
scenario, but may be incompatible with delay-constrained applications
such as streaming-media distribution.

Another approach (and the one followed in this paper) is to look for
end-to-end coding techniques that require little or no intelligence at
the internal nodes. Jaggi \emph{et al.} \cite{Jaggi++2008} show that, if
$C$ is the network capacity (per transmission-generation) and $z$ is the
min-cut from the adversary to a destination, then a rate of $C - 2z$
packets per generation is achievable.  The same rate can also be
achieved using the subspace approach introduced by K{\"o}tter and
Kschischang \cite{Kotter.Kschischang2008,Silva++2008}. A higher rate
$C-z$ can be achieved using a scheme proposed in \cite{Jaggi++2008} (see
also \cite{Nutman.Langberg2008}) if the source and sink nodes are
allowed to share a secret (i.e., if they have common information not
available to the adversary).

In all of the end-to-end techniques mentioned above, we observe that the
min-cut from the adversary to a sink node has a significant impact on
the achievable rates. If $z$ is large---for instance, if $z=C$---then
the adversary can jam the network with no hope of recovery.  It is
important, therefore, to conceive of protocols that induce
per-generation network topologies that can perform well, even in the
presence of adversaries.

The central question of this paper is the following:
\begin{quote}
{\small\bf What simple changes to a protocol (and hence to
the induced network topology) might be effective in
reducing the influence of an adversary, while not
(greatly) affecting the rate of reliable communication?}
\end{quote}

We show that in some important special cases it is indeed possible to
constrict potential adversaries, without any sacrifice of network
capacity.

In this paper, we introduce the concept of a \emph{broadcast
transformation}, which essentially constrains potential adversaries to
sending the same packet on all its outgoing links. In the case of a
single malicious node, this effectively enforces $z=1$. In order for
such a transformation to be possible, we introduce the concept of a
\emph{trusted node} that performs the role of broadcasting traffic.  A
beneficial side-effect of a broadcast transformation is to lower the
encoding complexity, since each node only needs to compute a single
outgoing packet in each round of communication.

In practice, such a broadcasting feature could be implemented at
trusted network gateways. For example, in overlay network applications,
it could be implemented by ISPs at their gateways, through the use of deep packet inspection or similar technologies. Note that the broadcast constraint is effectively enforced if all packets in the same generation\footnote{Here we assume the use of generation-based network coding, as proposed in \cite{Chou++2003}.} from the same user have identical payload (although with different headers corresponding to different destination addresses). Thus, for each user/generation pair, the network gateway could simply store the payload of the first packet it receives and drop any subsequent packets that have different payloads (while also flagging such a user as ``suspicious''). It is worth mentioning that, for wireless networks, this constraint is automatically satisfied due to the broadcast nature of wireless communication~\cite{Dana++2006}, so the results of this paper are also naturally applicable in this case.

In general, a broadcast transformation can reduce capacity
(significantly, in some cases), unless the network has special
connectivity properties. We will show that the maximum number of
\emph{internally-disjoint paths} between source and sink, rather than
edge-disjoint paths, becomes the key parameter. This result implies that
robustness to node failures and robustness to adversarial attacks are
closely related concepts. We then examine a class of networks, which we
call \emph{$d$-diverse} networks, that have excellent robustness
properties. This class of networks is strongly inspired by the work of
Jain, Lov\'{a}sz and Chou in~\cite{Jain++2007:JLC} on robust and
scalable network topologies. We show that, under certain conditions, no
loss in capacity is incurred when performing broadcast conversion in
such $d$-diverse networks.

The remainder of this paper is organized as follows. In
Sec.~\ref{sec:preliminaries} we review some basic concepts of graph
theory and network coding. In Sec.~\ref{sec:adversarial} we introduce
an adversarial model for communication over untrusted networks.  In
Sec.~\ref{sec:broadcast} we introduce the broadcast transformation and
characterize the achievable rates of broadcast-constrained networks by
relating it to parameters of the original network. In
Sec.~\ref{sec:d-diverse-networks} we introduce $d$-diverse networks and
study their robustness properties. In Sec.~\ref{sec:conclusions} we present our conclusions.

\section{Preliminaries}
\label{sec:preliminaries}

\subsection{Graph Theory}
\label{ssec:graph-theory}

In this paper, a \emph{graph} always means a directed multigraph, i.e.,
all edges are directed and multiple edges between nodes\footnote{We will
use ``vertex'' and ``node'' interchangeably in this paper.}
are allowed. If
$\calG$ is a graph, then $\mathcal{V}(\mathcal{G})$ and
$\mathcal{E}(\mathcal{G})$ denote its vertex set and edge set,
respectively. Let $\mathbb{Z}_+ = \left\{ 1, 2, 3, \ldots \right\}$. We assume that
$\mathcal{E}(\mathcal{G}) \subseteq \mathcal{V}(\mathcal{G}) \times
\mathcal{V}(\mathcal{G}) \times \mathbb{Z}_+$,
where
the third component is used to distinguish among multiple edges between the same nodes.

For $\mathcal{A},\mathcal{B} \subseteq \mathcal{V}(\mathcal{G})$,
let $[\mathcal{A},\mathcal{B}]$ denote the set of all edges $(a,b,i)$ in $\mathcal{G}$ such that $a \in \mathcal{A}$ and $b \in \mathcal{B}$. We may also denote $[a,\mathcal{B}] \triangleq [\{a\},\mathcal{B}]$, $[\mathcal{A},b] \triangleq [\mathcal{A},\{b\}]$ and $[a,b] \triangleq [\{a\},\{b\}]$. For $[\calA,\calB]$ and any other concept that implicitly depends on $\calG$, we will use a subscript such as $[\calA,\calB]_\calG$ if the graph is not clear from the context.

If $\calS \subseteq \calV(\calG)$, then $\calG - \calS$ is the graph consisting of the vertex set $\calV(\calG) \setminus \calS$ and edge set $\calE(\calG) \setminus [\calV,\calS] \cup [\calS,\calV]$.

Let $|\calS|$ denote the cardinality of a set $\calS$. For nodes $u$ and $v$, if $|[u,v]|>0$, then $u$ is called a \emph{parent} of $v$, while $v$ is called a \emph{child} of $u$. We use $\Gamma^-(v)$ and $\Gamma^+(v)$ to denote, respectively, the set of all parents and the set of all children of a node $v$.

Let $\indegree(v)=|[\calV(\calG),v]|$ and $\outdegree(v)=|[v,\calV(\calG)]|$.

For $e \in [u,v]$, let $\tail(e) = u$ and $\head(e) = v$.
Also, for $\mathcal{E} \subseteq \mathcal{E}(\mathcal{G})$,
let $\tail(\mathcal{E}) \triangleq \cup_{e \in \mathcal{E}} \tail(e)$ and,
similarly, let $\head(\mathcal{E}) \triangleq \cup_{e \in \mathcal{E}} \head(e)$.

For $\mathcal{S} \subseteq \mathcal{V}(\mathcal{G})$, let
$\bar{\mathcal{S}} \triangleq \mathcal{V}(\mathcal{G}) \setminus
\mathcal{S}$.
For distinct nodes $s$ and $t$, if $s \in \mathcal{S}$ and $t \in \bar{\mathcal{S}}$, then $[\mathcal{S},\bar{\mathcal{S}}]$ is called an \emph{$s,t$-edge cut}. Let
\begin{equation}\nonumber
  \K(s,t) \triangleq \min_{\substack{\mathcal{S} \subseteq \mathcal{V}(\mathcal{G}):\\ s \in \mathcal{S} \not\ni t}}\, |[\mathcal{S},\bar{\mathcal{S}}]|
\end{equation}
denote the minimum size of an $s,t$-edge cut. Note that $\K(s,t)$ is often denoted by $\kappa'(s,t)$. For convenience, define also
\begin{equation}\nonumber
  \K(\mathcal{A},t) \triangleq \min_{\substack{\mathcal{S} \subseteq \mathcal{V}(\mathcal{G}):\\ \mathcal{A} \subseteq \mathcal{S} \not\ni t }}\, |[\mathcal{S},\bar{\mathcal{S}}]|.
\end{equation}

A \emph{path} is a sequence of vertices such that from each vertex
there is an edge to the next vertex in the sequence.
The first and last vertices in a finite path are called \emph{end
vertices}, and the other vertices are
called \emph{internal vertices}.

For distinct nodes $s$ and $t$, a set $\calS \subseteq \calV(\calG)\setminus \{s,t\}$ is called an \emph{$s,t$-vertex cut} if $\calG - \calS$ has no path connecting $s$ and $t$.
Note that for an $s,t$-vertex cut to exist, $t$ cannot be a child of $s$. In that condition, let $\kappa(s,t)$ denote the minimum size of an $s,t$-vertex cut.

Two paths are called \emph{edge-disjoint} if they have no edges in
common, and are called \emph{internally-disjoint} if they have no
internal nodes in common. Let $\lambda'(s,t)$ denote the maximum number of pairwise edge-disjoint paths from a node $s$ to a node $t$ and let $\lambda(s,t)$ denote the maximum number of pairwise internally-disjoint paths from $s$ to $t$.

We will frequently refer to the edge and vertex versions of Menger's Theorem on directed graphs~\cite{West} (the former is also known as the Max-flow Min-cut Theorem).
\medskip
\begin{theorem}[Menger's Theorem, edge version]
    \label{thm:menger-edge}
    For any vertices $s$ and $t$, $\lambda'(s,t) = \K(s,t)$.
\end{theorem}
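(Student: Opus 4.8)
The plan is to sandwich both quantities between the value of a maximum integer flow and prove two inequalities. For the easy direction $\lambda'(s,t) \le \K(s,t)$, I would fix an arbitrary $s,t$-edge cut $[\calS,\bar\calS]$ (so $s \in \calS \not\ni t$) together with a family of pairwise edge-disjoint $s,t$-paths. Each such path starts in $\calS$ and ends in $\bar\calS$, so it must contain at least one edge of $[\calS,\bar\calS]$; edge-disjointness forces these crossing edges to be distinct across the different paths, so the number of paths cannot exceed $|[\calS,\bar\calS]|$. Maximizing over path families and minimizing over cuts then yields $\lambda'(s,t) \le \K(s,t)$.

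For the reverse inequality I would work directly with a unit-capacity flow rather than cite the Max-flow Min-cut Theorem, since that is the statement itself. Assign capacity $1$ to every edge and let $f$ be a maximum flow from $s$ to $t$ produced by the augmenting-path procedure; because all capacities are integers, $f$ may be taken integral, so every edge carries flow $0$ or $1$. Let $F$ denote the value of $f$.

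Two things can be extracted from $f$. First, a decomposition: the edges carrying unit flow form a subgraph in which every vertex other than $s$ and $t$ is balanced, and a routine peeling argument writes this subgraph as $F$ edge-disjoint $s,t$-paths together with some edge-disjoint cycles; discarding the cycles exhibits $F$ edge-disjoint paths, so $\lambda'(s,t) \ge F$. Second, a cut: let $\calS$ be the set of vertices reachable from $s$ in the residual graph of $f$. Because $f$ is maximum, no augmenting path remains, so $t \notin \calS$; moreover every edge of $[\calS,\bar\calS]$ is saturated and every edge of $[\bar\calS,\calS]$ is empty. Since the value of a flow equals its net flow across any cut, $F = |[\calS,\bar\calS]|$, and as this is a valid $s,t$-edge cut we obtain $\K(s,t) \le F$.

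Chaining the three relations gives $\lambda'(s,t) \le \K(s,t) \le F \le \lambda'(s,t)$, forcing equality throughout. The main obstacle is the reverse inequality, and within it the two structural claims about the maximum flow: that an integral maximum flow exists and decomposes into the asserted number of edge-disjoint paths, and that the reachable set in the residual graph induces a cut whose forward edges are exactly the saturated ones. These facts are precisely the content of the Max-flow Min-cut Theorem, so the genuine work lies in reproving that result in the unit-capacity setting; the weak inequality and the final chaining are then immediate.
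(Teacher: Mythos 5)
Your proposal is correct, but note that the paper does not prove this statement at all: Theorem~\ref{thm:menger-edge} is quoted as a classical result, with a citation to West's \emph{Introduction to Graph Theory}, and is used purely as a tool (e.g., in Theorem~\ref{thm:rate-after-broadcast} and Lemma~\ref{lem:minimal-vertex-cut}). So there is no in-paper argument to compare against; what you have done is supply the standard textbook proof that the authors deliberately omit. Your sketch is the canonical flow-theoretic one and its logic is sound: the counting direction (each of a family of edge-disjoint $s,t$-paths must use a distinct edge of any $s,t$-edge cut, giving $\lambda'(s,t) \leq \K(s,t)$), then an integral maximum unit-capacity flow of value $F$, decomposed into $F$ edge-disjoint paths plus discarded cycles to get $F \leq \lambda'(s,t)$, and the residual-reachability set $\calS$ giving a cut with $|[\calS,\bar{\calS}]| = F$, hence $\K(s,t) \leq F$; chaining forces equality. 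You are also right to flag, rather than hide, that the substance of the reverse inequality is max-flow min-cut in the unit-capacity setting: since you reprove it (augmenting paths, integrality, residual cut) instead of citing it, there is no circularity. The only details left implicit---termination of the augmenting-path procedure on a finite graph with integer capacities, extraction of simple paths from the flow decomposition, and the fact that everything goes through verbatim for directed multigraphs---are routine and do not constitute gaps.
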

\medskip
\begin{theorem}[Menger's Theorem, vertex version]
    \label{thm:menger-vertex}
    For any vertices $s$ and $t$, if $|[s,t]|=0$, then $\lambda(s,t) = \kappa(s,t)$.
\end{theorem}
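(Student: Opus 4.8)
The plan is to reduce the vertex version to the edge version (Theorem~\ref{thm:menger-edge}), which we are free to invoke, by means of the classical vertex-splitting construction. First I would build an auxiliary graph $\calG'$ as follows: leave $s$ and $t$ intact, discard every edge entering $s$ or leaving $t$ (these lie on no path from $s$ to $t$ and affect none of the relevant quantities), and replace each internal node $v$ by two nodes $v_{\text{in}}, v_{\text{out}}$ joined by a single \emph{internal} edge $v_{\text{in}} \to v_{\text{out}}$. Each original edge $(u,v,i)$ becomes a \emph{link} edge directed from the out-copy of $u$ to the in-copy of $v$ (with $s$ and $t$ serving as their own copies). The hypothesis $|[s,t]|=0$ guarantees that no link edge runs directly from $s$ to $t$, a fact that will be used below.

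Next I would establish the two correspondences that make the reduction work. For paths: any path from $s$ to $t$ in $\calG'$ that visits an internal node $v$ must enter at $v_{\text{in}}$ and leave at $v_{\text{out}}$, hence traverse the unique internal edge of $v$; consequently two such paths in $\calG'$ are edge-disjoint if and only if the corresponding paths in $\calG$ share no internal node. This yields a correspondence between families of internally-disjoint paths from $s$ to $t$ in $\calG$ and families of edge-disjoint paths from $s$ to $t$ in $\calG'$, so $\lambda(s,t) = \lambda'_{\calG'}(s,t)$. For cuts: given an $s,t$-vertex cut $\calS$ in $\calG$, the set of internal edges $\{v_{\text{in}} \to v_{\text{out}} : v \in \calS\}$ is an $s,t$-edge cut in $\calG'$ of the same cardinality (every path from $s$ to $t$ in $\calG'$ projects to a walk from $s$ to $t$ in $\calG$, which must meet $\calS$), giving $\K_{\calG'}(s,t) \le \kappa(s,t)$.

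The reverse cut inequality is the step I expect to be the main obstacle. Taking a minimum $s,t$-edge cut $[\calX,\bar{\calX}]_{\calG'}$ of $\calG'$, which by the paper's definition of $\K$ already has this partition form, I would argue that it can be chosen to consist solely of internal edges: any severed link edge $a \to b$ has at least one endpoint that is an internal copy (this is precisely where $|[s,t]|=0$ is needed, since it rules out the one bad case $a=s$, $b=t$), and such an edge can be traded for the adjacent internal edge $u_{\text{in}} \to u_{\text{out}}$ or $v_{\text{in}} \to v_{\text{out}}$ without enlarging the cut or destroying its separating property. A purely-internal edge cut of $\calG'$ projects back to an $s,t$-vertex cut of $\calG$ of equal size, yielding $\kappa(s,t) \le \K_{\calG'}(s,t)$ and hence $\kappa(s,t) = \K_{\calG'}(s,t)$.

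Finally I would chain the pieces: applying the edge version of Menger's Theorem to $\calG'$ gives $\lambda'_{\calG'}(s,t) = \K_{\calG'}(s,t)$, and combining this with the two correspondences yields $\lambda(s,t) = \lambda'_{\calG'}(s,t) = \K_{\calG'}(s,t) = \kappa(s,t)$, as required. It is worth noting that the condition $|[s,t]|=0$ is not merely technical: if $s$ and $t$ were adjacent, no $s,t$-vertex cut would exist, mirroring the fact that the offending link edge in $\calG'$ could not be absorbed by any internal edge.
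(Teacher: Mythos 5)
Your proof is correct: it is the classical vertex-splitting reduction of the vertex version of Menger's Theorem to the edge version, and the details you flag as delicate (the trading argument for the reverse cut inequality, and the role of $|[s,t]|=0$) are handled soundly. The paper itself states this theorem without proof, citing a textbook, but your construction is precisely the ``standard argument'' the paper alludes to in its proof of Theorem~\ref{thm:rate-after-broadcast} --- where the broadcast transformation plays exactly the role of your auxiliary graph $\calG'$ --- so you have taken essentially the same (and the intended) route.
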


\subsection{Network Coding}
\label{ssec:network-coding}

A \emph{(single-source) multicast network} $\mathcal{N} = (\mathcal{G},s,\mathcal{T})$
consists of a (directed multi)graph $\mathcal{G}$ with a distinguished
\emph{source node}~$s$ and a set of \emph{sink nodes} $\mathcal{T}
\not\ni s$. In a multicast problem, each sink node requests the same
message that is observed at the source node.

Each link in the network is assumed to transport, free of errors, a
packet of a certain fixed size. A packet in a link entering a node is
said to be an incoming packet to that node, and similarly a packet in a
link leaving a node is said to be an outgoing packet from that node.

When network coding is used, the source node produces each of its
outgoing packets as an arbitrary function of the message it observes.
Also, each non-source node produces each of its outgoing packets as an
arbitrary function of its incoming packets. The set of functions
applied by all nodes in the network specifies a \emph{network code}. If
each sink node, by observing its incoming packets, is able to correctly
identify the source message, then we say that the decoding is
successful.

Let $q$ be the size of the set from which packets are selected and let $\Omega$ be the set from which the source message is
selected. The \emph{rate} of communication is defined as
\begin{equation}\nonumber
  R(\Omega,q) \triangleq \log_{q} |\Omega|
\end{equation}
which is the amount of information, measured in packets, that can be conveyed by the source message.

A rate $R$ is said to be \emph{achievable} for a network $\mathcal{N}$
if, for any $\epsilon > 0$, there exist $q$ and $\Omega$ with $R(\Omega,q) \geq R$,
along with a corresponding network code, such that the probability of unsuccessful decoding is smaller than $\epsilon$.

For a multicast network $\mathcal{N} = (\mathcal{G},s,\mathcal{T})$, define
\begin{equation}\nonumber
  C(\mathcal{N}) \triangleq \min_{t \in \mathcal{T}}\, \K_\mathcal{G}(s,t).
\end{equation}
A key result in \cite{Ahlswede++2000} is that a rate $R$ is achievable for $\mathcal{N}$ if and only if
\begin{equation}\nonumber
  R \leq C(\mathcal{N}).
\end{equation}
For this reason, $C(\mathcal{N})$ is referred to as the \emph{capacity} of a multicast network $\mathcal{N}$.

\section{Untrusted Multicast Networks}
\label{sec:adversarial}

In this section we describe a node-centric adversarial model for networks that can be subject to pollution attacks. This model will be used in the remainder of the paper for the computation of achievable rates.

We start with the definition of an untrusted multicast network. Consider a multicast network. A node is said to be \emph{trusted} if it is guaranteed to behave according to a specified network coding protocol; otherwise, it is said to be \emph{untrusted}. In particular, a trusted node cannot be controlled by an adversary, while an untrusted node may (or may not) be so. An \emph{untrusted multicast network} $\mathcal{N} = (\mathcal{G},s,\mathcal{T},\mathcal{U})$ is a multicast network $(\mathcal{G},s,\mathcal{T})$ with a specified set of untrusted nodes $\mathcal{U} \subseteq \mathcal{V}(\mathcal{G}) \setminus \{s\}$ such that all nodes in $\mathcal{V}(\mathcal{G})\setminus \mathcal{U}$ are trusted.

An adversarial model for communication over an
untrusted multicast network may be specified as follows. The adversary chooses a set of adversarial
nodes $\mathcal{A} \subseteq \mathcal{U}$ with $|\mathcal{A}| \leq w$
prior to the beginning of the session. The set $\mathcal{A}$ is unknown
to source and sink nodes, but remains fixed during the whole session.
The adversary controls the nodes in $\mathcal{A}$, which are allowed to
transmit any arbitrary packets on their outgoing links and also to
cooperate with each other. Since an adversarial node cannot be counted
as a sink node, we say that decoding is successful if each node in
$\mathcal{T}\setminus \mathcal{A}$ can correctly recover the source
message.

Several end-to-end error control schemes have been proposed to ensure
reliable communication over an untrusted network
\cite{Jaggi++2008,Kotter.Kschischang2008,Silva++2008,Nutman.Langberg2008,Silva.Kschischang2009:KeyBased-CWIT}.
The rates achievable by these schemes depend on further assumptions on
the system model. In this paper, we focus on the two most basic of these
models. The \emph{omniscient adversary} (OA) model refers to the case
where no constraints are imposed on the knowledge or computational power
of the adversary. If an additional assumption is made that common
randomness is available between the source and sink nodes, then
resulting scenario is called the \emph{shared secret} (SS) model.

Achievable rates under these models are often stated from an
edge-centric perspective, i.e., assuming that the adversary controls a
certain number of edges. Below we restate these results from a
node-centric perspective.

\medskip
\begin{theorem}[\cite{Jaggi++2008,Silva++2008}]\label{thm:rate-OA}
Let $\mathcal{N} = (\mathcal{G},s,\mathcal{T},\mathcal{U})$ be an untrusted multicast network with at most $w$ adversarial nodes. Under the shared secret model, it is possible to achieve the rate
\begin{equation}\label{eq:rate-OA}
    R^{\sf OA}(\mathcal{N},w) \triangleq
        \min_{\substack{\mathcal{A} \subseteq \mathcal{U}:\\|\mathcal{A}| \leq w}}\,
            \min_{t \in \mathcal{T} \setminus \mathcal{A}}\,
                R^{\sf OA}(s,t,\mathcal{A})
\end{equation}
where
\begin{equation}\label{eq:rate-OA-individual}\nonumber
  R^{\sf OA}(s,t,\mathcal{A}) \triangleq \left[ \K(s,t) - 2\K(\mathcal{A},t) \right]^+.
\end{equation}
\end{theorem}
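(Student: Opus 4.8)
The plan is to deduce this node-centric statement from the edge-centric achievability results of \cite{Jaggi++2008,Silva++2008} by translating the two relevant node-centric quantities, $\K(s,t)$ and $\K(\calA,t)$, into the edge-centric parameters (source--sink min-cut and number of corrupted edges) that those results require. Since the adversary commits to $\calA$ before transmission but keeps it hidden, a single network code must succeed simultaneously against every admissible choice $\calA \subseteq \calU$ with $|\calA|\le w$ and at every surviving sink $t \in \calT\setminus\calA$. I would therefore fix an arbitrary such pair $(\calA,t)$, show that rate $[\K(s,t)-2\K(\calA,t)]^+$ is correctable for that pair, and then observe that the worst case over all pairs---exactly the double minimization defining $R^{\sf OA}(\calN,w)$---is achievable against all of them at once. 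Note that the factor-$2$ rate is the omniscient-adversary guarantee and so remains valid \emph{a fortiori} when a shared secret is additionally available.

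For a fixed $(\calA,t)$, the central step is to bound the number of error dimensions the adversary can impose at $t$. Let $[\calS,\bar\calS]$ be an edge cut attaining $\K(\calA,t)$, so that $\calA\subseteq\calS\not\ni t$ and $|[\calS,\bar\calS]|=\K(\calA,t)$. Every adversarial node lies in $\calS$, every node of $\calS\setminus\calA$ is trusted and hence merely forms $\Fq$-linear combinations of its inputs, and every corrupted packet must cross the $\K(\calA,t)$ cut edges before it can reach $t\in\bar\calS$. Consequently the entire adversarial contribution to the packets observed at $t$ factors through these cut edges, so its rank is at most $\K(\calA,t)$; the linear processing performed by trusted nodes cannot raise this rank. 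In the transfer-matrix picture this is precisely the statement that the node adversary is equivalent, as seen by $t$, to an edge adversary corrupting at most $z=\K(\calA,t)$ edges, while the source can deliver $C=\K(s,t)$ innovative packets to $t$ by Theorem~\ref{thm:menger-edge}.

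Having reduced $(\calA,t)$ to an edge-centric instance with parameters $C=\K(s,t)$ and $z=\K(\calA,t)$, I would invoke the cited end-to-end constructions (e.g.\ a rank-metric / MRD outer code), which correct any error of rank at most $(C-R)/2$ by minimum-distance decoding. Choosing the overall rate $R=R^{\sf OA}(\calN,w)$ guarantees $R\le \K(s,t)-2\K(\calA,t)$, i.e.\ $(C-R)/2\ge z$, for every pair $(\calA,t)$; since the decoder corrects up to its designed radius without needing to know $\calA$, decoding succeeds at $t$ regardless of the adversary's choice. The $[\,\cdot\,]^+$ simply records that a nonpositive bound yields no guarantee.

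I expect the main obstacle to be making the rank bound rigorous rather than heuristic: one must argue carefully that trusted intermediate linear combining within $\calS$ cannot increase the dimension of the error space beyond the cut size $\K(\calA,t)$, and that a single fixed code serves all sinks and all admissible adversary sets simultaneously. The subtlety here is that $\K(s,t)$ varies with $t$, so the argument must lean on the multicast achievability machinery of the references, with the outer minimization over $(\calA,t)$ absorbing the worst case. A secondary point worth checking is the boundary behavior when $\K(s,t)-2\K(\calA,t)\le 0$, where the claim degenerates to the trivial rate $0$.
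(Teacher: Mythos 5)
The paper gives no proof of this theorem at all---it is a node-centric restatement of the edge-centric achievability results cited from the references---and your argument supplies exactly the translation the paper leaves implicit: the adversarial contribution observed at $t$ must factor through a minimum $\calA,t$-edge cut and hence has rank at most $\K(\calA,t)$, reducing each pair $(\calA,t)$ to an edge-adversary instance in which the cited rank-metric/subspace machinery achieves rate $\left[\K(s,t)-2\K(\calA,t)\right]^+$, with a single fixed code serving all admissible pairs simultaneously. Your further observation that the factor-2 formula is the omniscient-adversary guarantee, and therefore holds \emph{a fortiori} when a shared secret is available, correctly resolves what is evidently a typo in the paper (the model labels in Theorems~\ref{thm:rate-OA} and~\ref{thm:rate-SS} are swapped relative to the $C-2z$ and $C-z$ rates described in the introduction), so the proposal is sound and consistent with the paper's intent.
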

\medskip

\medskip
\begin{theorem}[\cite{Jaggi++2008,Nutman.Langberg2008,Silva.Kschischang2009:KeyBased-CWIT}]\label{thm:rate-SS}
Let $\mathcal{N} = (\mathcal{G},s,\mathcal{T},\mathcal{U})$ be an untrusted multicast network with at most $w$ adversarial nodes. Under the omniscient adversary model, it is possible to achieve the rate
\begin{equation}\label{eq:rate-SS}
    R^{\sf SS}(\mathcal{N},w) \triangleq
        \min_{\substack{\mathcal{A} \subseteq \mathcal{U}:\\|\mathcal{A}| \leq w}}\,
            \min_{t \in \mathcal{T} \setminus \mathcal{A}}\,
                R^{\sf SS}(s,t,\mathcal{A})
\end{equation}
where
\begin{equation}\label{eq:rate-SS-individual}\nonumber
  R^{\sf SS}(s,t,\mathcal{A}) \triangleq \left[ \K(s,t) - \K(\mathcal{A},t) \right]^+.
\end{equation}
\end{theorem}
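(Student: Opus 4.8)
The plan is to obtain this node-centric statement from the \emph{edge-centric} achievability results cited in the theorem. Their essential content, in the rank-metric/subspace formulation of \cite{Kotter.Kschischang2008,Silva++2008,Silva.Kschischang2009:KeyBased-CWIT}, is that for a single source--sink pair of min-cut $\K(s,t)$ a rate of $\K(s,t)-z$ is achievable (matching the factor-one formula of this theorem) whenever the additive error injected into the network has rank at most $z$. The key point is that these codes are \emph{universal}: they correct \emph{any} error of rank at most $z$, irrespective of how many edges actually carry it. Hence it suffices to prove the single reduction lemma that, for a fixed adversarial set $\calA$ and a fixed sink $t\notin\calA$, every error pattern the nodes of $\calA$ can inject appears at $t$ as an additive error of rank at most $\K(\calA,t)$.

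To prove this lemma I would fix a minimum edge cut $[\calS,\bar{\calS}]$ with $\calA\subseteq\calS\not\ni t$ and $|[\calS,\bar{\calS}]|=\K(\calA,t)=:z$, which exists by the definition of $\K(\calA,t)$. The adversary injects packets only on the out-edges $[\calA,\calV(\calG)]$, each of which has its tail in $\calA\subseteq\calS$. Working in the linear transfer model, the claim is that the adversarial perturbation reaches $t\in\bar{\calS}$ only through the $z$ cut edges: an out-edge of $\calA$ is either itself a cut edge, directly perturbing one cut-edge value, or it lies inside $\calS$, so that its effect can influence $\bar{\calS}$ only after crossing the cut. Consequently the linear map from the injection points to the vector received at $t$ factors through the values carried on $[\calS,\bar{\calS}]$, an intermediate space of dimension $z$; the error observed at $t$ therefore has rank at most $z=\K(\calA,t)$.

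With the lemma established, the theorem assembles as follows. For each sink $t$, the worst-case error rank over all admissible adversaries is $z_t:=\max\{\K(\calA,t):\calA\subseteq\calU,\ |\calA|\le w,\ t\notin\calA\}$, a quantity determined by the (known) topology, $\calU$ and $w$. The designer may therefore fix a single universal code correcting rank-$z_t$ errors, achieving rate $[\K(s,t)-z_t]^+$ to $t$. Since $\calA$ is unknown and decoding must succeed simultaneously at every non-adversarial sink, the guaranteed rate is $\min_t[\K(s,t)-z_t]^+$, which coincides with $\min_{\calA}\min_{t}[\K(s,t)-\K(\calA,t)]^+$ after pushing the maximization over $\calA$ inside and reinterpreting it as the corresponding minimization.

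The step I expect to be the main obstacle is the rigorous justification of the ``factoring through the cut'' argument. One must verify, within the particular algebraic model adopted---and in particular in the presence of directed cycles, where edge values are defined by a fixed-point (formal power series) transfer matrix rather than by topological order---that no adversarial influence can reach $t$ except through the $z$ cut edges, and that the intermediate space is genuinely $z$-dimensional. A secondary point meriting care is confirming that the cited schemes are rank-based and universal, so that controlling the \emph{rank} of the error rather than the potentially much larger number $|[\calA,\calV(\calG)]|$ of corrupted edges is sufficient to invoke them; this is exactly where $\K(\calA,t)$, rather than the raw out-degree of $\calA$, enters the bound.
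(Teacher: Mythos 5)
Two preliminary observations. First, the paper contains no proof of this statement at all: it is imported, in node-centric dress, from the cited references, so your proposal can only be judged against the way those references must be invoked. Second, your central reduction lemma is the right bridge: for fixed $\calA$ and $t \notin \calA$, any injection by the nodes of $\calA$ reaches $t$ as an additive error whose rank is at most $\K(\calA,t)$, because the linear transfer map from the out-edges of $\calA$ to $t$ factors through a minimum edge cut separating $\calA$ from $t$. That lemma, together with your rearrangement of the min-max in (\ref{eq:rate-SS}) into per-sink worst cases, is exactly how the node-centric form is meant to follow from the edge-centric literature.

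The genuine gap is in your coding-theoretic premise. You assert that the cited schemes are ``universal: they correct \emph{any} error of rank at most $z$'' at rate $\K(s,t)-z$. No such deterministic primitive exists: guaranteed correction of arbitrary rank-$z$ errors costs $2z$ in redundancy (the network Singleton-type bound), which is precisely why \cite{Kotter.Kschischang2008,Silva++2008} achieve only the factor-two rate of Theorem~\ref{thm:rate-OA}. The factor-one rate is specifically a \emph{shared-secret} phenomenon: in the key-based scheme of \cite{Silva.Kschischang2009:KeyBased-CWIT} (see also \cite{Jaggi++2008,Nutman.Langberg2008}), the source uses a key hidden from the adversary, and decoding succeeds with high probability precisely because the injected error cannot depend on that key; correction is probabilistic, not universal-deterministic. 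Your proof never invokes the secret anywhere, so as written it rests on a primitive that provably does not exist. Relatedly, you should have flagged that the theorem's hypothesis is a typo in the paper: the introduction (and the superscripts $R^{\sf OA}$, $R^{\sf SS}$) make clear that $C-2z$ belongs to the omniscient-adversary model and $C-z$ to the shared-secret model, so Theorems~\ref{thm:rate-OA} and~\ref{thm:rate-SS} have their model names interchanged. Read literally (omniscient adversary, factor one), the statement you set out to prove is false; your outline becomes a correct proof of the intended statement once your reduction lemma is combined explicitly with the shared-secret scheme rather than with a hypothetical universal rank-metric code.
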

\medskip

We will use (\ref{eq:rate-OA}) and (\ref{eq:rate-SS}) as benchmarks to
evaluate the effective throughput of a multicast network in the presence
of adversaries.

Note that when there is no adversary, both expressions
reduce to the capacity of the underlying network, i.e.,
\begin{equation}\nonumber
  R^{\sf OA}(\mathcal{N},0) = R^{\sf SS}(\mathcal{N},0) = C(\mathcal{N}).
\end{equation}

From Theorems~\ref{thm:rate-OA} and~\ref{thm:rate-SS} we observe that, for an
adversarial set $\mathcal{A}$ and a sink node $t$, the quantity
$\K(\mathcal{A},t)$ can have a severe impact on the achievable rate of
the untrusted network. If $\K(\mathcal{A},t)$ is large compared to
$\K(s,t)$, then the adversary can overwhelm the system with corrupt
packets, preventing successful decoding.

\section{Broadcast Transformation}
\label{sec:broadcast}

In this section, we propose an approach to restrict the min-cut between
adversarial nodes and sink nodes, which can lead to potentially higher
achievable rates over untrusted networks.  The idea is to force each
adversarial node to transmit only copies of the same packet, effectively
constraining its outdegree to be at most 1. As we do not know beforehand
which nodes are adversarial, the constraint must be enforced on every
\emph{untrusted} node.
This operation can be represented graphically by introducing a new node
$u^+$, as described in Fig.~\ref{fig:broadcast}.
\begin{figure}
  \centering
  \includegraphics[scale=0.6]{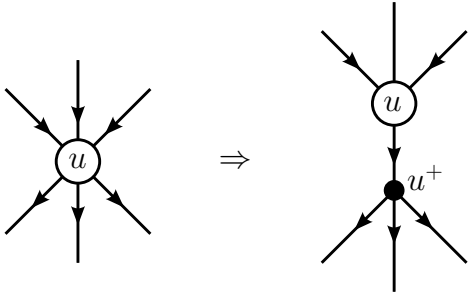}
%  \scalebox{0.6}{\input{figs/bc_transform.pstex_t}}
  \caption{Broadcast transformation.}
  \label{fig:broadcast}
\end{figure}
Here, $u^+$ is a \emph{trusted node} that only replicates the packet
received. The overall operation, which we refer to as a \emph{broadcast transformation}, is formally defined below.
\smallskip
\begin{definition}\label{def:broadcast-transformation}
Let $\mathcal{N} = (\mathcal{G},s,\mathcal{T},\mathcal{U})$ be an untrusted multicast network with $\mathcal{G}= (\mathcal{V},\mathcal{E})$. The \emph{broadcast transformation} of $\calN$, denoted by $\beta(\mathcal{N})$, is an untrusted multicast network $(\hat{\mathcal{G}},s,\mathcal{T},\mathcal{U})$ with
$\hat{\mathcal{G}} = (\hat{\mathcal{V}},\hat{\mathcal{E}})$ given by
\begin{align}
    \hat{\mathcal{V}} &= \mathcal{V} \cup \{u^+ \colon u \in \mathcal{U}\} \nonumber \\
    \hat{\mathcal{E}} &= (\mathcal{E} \setminus
    [\mathcal{U},\mathcal{V}]) \cup \{(u,u^+,1)\colon u \in \mathcal{U}\} \cup [\mathcal{U},\mathcal{V}]^+ \nonumber
\end{align}
where $[\mathcal{U},\mathcal{V}]^+ = \{(u^+,v,i) \colon (u,v,i) \in [\mathcal{U},\mathcal{V}]\}$.
\end{definition}
\medskip

After a broadcast transformation, adversarial nodes can only do limited
harm, as shown in the following simple result.

\medskip
\begin{proposition}\label{prop:broadcast-rates}
  Let $\beta(\calN)$ be the broadcast transformation of an untrusted multicast network $\mathcal{N} =(\mathcal{G},s,\mathcal{T},\mathcal{U})$. For $0 \leq w \leq C(\beta(\calN))$, we have
  \begin{align}
  R^{\sf OA}(\beta(\mathcal{N}),w) &\geq \left[C(\beta(\mathcal{N})) - 2 w\right]^+  \nonumber \\
  R^{\sf SS}(\beta(\mathcal{N}),w) &\geq \left[C(\beta(\mathcal{N})) - w\right]^+  \nonumber
  \end{align}
  with equality if $\mathcal{U} = \mathcal{V}(\mathcal{G})\setminus \{s\}$.
\label{prop:rate-r-transform}
\end{proposition}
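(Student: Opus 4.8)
The plan is to distill the broadcast transformation down to a single structural fact and then read the bounds off the rate formulas of Theorems~\ref{thm:rate-OA} and~\ref{thm:rate-SS}. The one fact I need is that, by Definition~\ref{def:broadcast-transformation}, every untrusted node $u$ has outdegree exactly $1$ in $\hat{\mathcal{G}}$: all of its original outgoing edges have been re-routed to emanate from the fresh relay $u^+$, leaving only the single edge $(u,u^+,1)$. Everything else follows from feeding this into the adversarial min-cut term $\K(\mathcal{A},t)$.

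First I would prove the key lemma: for every $\mathcal{A}\subseteq\mathcal{U}$ and every $t\in\mathcal{V}(\hat{\mathcal{G}})\setminus\mathcal{A}$, we have $\K_{\hat{\mathcal{G}}}(\mathcal{A},t)\le|\mathcal{A}|$. This is immediate by testing the cut $\mathcal{S}=\mathcal{A}$: since the only edge leaving each $u\in\mathcal{A}\subseteq\mathcal{U}$ is $(u,u^+,1)$ and $u^+\notin\mathcal{A}$, the crossing set is $[\mathcal{A},\bar{\mathcal{A}}]=\{(u,u^+,1)\colon u\in\mathcal{A}\}$, an $\mathcal{A},t$-edge cut of size exactly $|\mathcal{A}|$. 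I then substitute this into (\ref{eq:rate-OA}). Because $\K_{\hat{\mathcal{G}}}(\mathcal{A},t)\le|\mathcal{A}|\le w$ and $[\,\cdot\,]^+$ is non-increasing in its subtracted argument, each inner term obeys $[\K(s,t)-2\K(\mathcal{A},t)]^+\ge[\K(s,t)-2w]^+$; and since $\K_{\hat{\mathcal{G}}}(s,t)\ge C(\beta(\mathcal{N}))$ for every sink $t$ by the definition of capacity as the minimum over $\mathcal{T}$, this is in turn at least $[C(\beta(\mathcal{N}))-2w]^+$. Taking the two minima preserves the inequality, giving the first bound; the shared-secret bound comes out identically with the factor $2$ replaced by $1$.

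For the equality claim when $\mathcal{U}=\mathcal{V}(\mathcal{G})\setminus\{s\}$, I would exhibit an adversary attaining the bound. Fix a sink $t^*$ with $\K_{\hat{\mathcal{G}}}(s,t^*)=C(\beta(\mathcal{N}))$; it then suffices to produce $\mathcal{A}\subseteq\mathcal{U}$ with $|\mathcal{A}|\le w$, $t^*\notin\mathcal{A}$, and $\K_{\hat{\mathcal{G}}}(\mathcal{A},t^*)=w$, since this makes $R^{\sf OA}(s,t^*,\mathcal{A})=[C(\beta(\mathcal{N}))-2w]^+$ and the outer minima can only be smaller. The construction exploits the unit in-degree of the relays: by Theorem~\ref{thm:menger-edge} there are $C(\beta(\mathcal{N}))\ge w$ edge-disjoint $s$-$t^*$ paths, and because any path entering a relay $u^+$ must use its unique incoming edge $(u,u^+,1)$, no two edge-disjoint paths can traverse the same relay. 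Hence (when no path reaches $t^*$ directly from $s$, i.e. $t^*$ is not a child of $s$) these paths pass through that many \emph{distinct} relays $u^+$, and therefore through that many distinct untrusted nodes $u$. Selecting any $w$ of them as $\mathcal{A}$ yields $w$ edge-disjoint paths $u_i\to u_i^+\to\cdots\to t^*$ (tails of the chosen disjoint paths), so $\K_{\hat{\mathcal{G}}}(\mathcal{A},t^*)\ge w$, which with the lemma forces equality.

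I expect the equality direction to be the main obstacle: the lower bounds are essentially a one-line substitution, whereas tightness requires arguing that enough untrusted nodes lie on disjoint paths into a capacity-limiting sink. The delicate point is guaranteeing $w$ \emph{distinct} such nodes, and this is exactly where the unit in-degree of the relays $u^+$ is indispensable; one must also dispose of paths that enter $t^*$ directly from $s$ (which carry no untrusted relay), so the cleanest statement routes all $C(\beta(\mathcal{N}))$ paths through relays and treats the case where $t^*$ is not a child of $s$.
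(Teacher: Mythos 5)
Your proof is correct and follows essentially the same route as the paper's: the lower bounds come from observing that every untrusted node has outdegree $1$ in $\hat{\mathcal{G}}$, hence $\K_{\hat{\mathcal{G}}}(\mathcal{A},t)\le|\mathcal{A}|\le w$, and the equality claim is obtained by choosing as adversaries $w$ distinct untrusted parents of a capacity-achieving sink $t^*$ so that $\K_{\hat{\mathcal{G}}}(\mathcal{A},t^*)=w$. Your Menger-based extraction of the distinct parents and the explicit two-edge disjoint paths $u_i\to u_i^+\to t^*$ merely spell out what the paper asserts in one line (``$t$ must have at least $C(\beta(\mathcal{N}))$ distinct parents in $\mathcal{G}$, all of which are untrusted \ldots Then $\K_{\hat{\mathcal{G}}}(\mathcal{A},t)=w$''). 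Regarding the caveat you flag---paths entering $t^*$ directly from $s$---you should know that the paper's own proof silently has the same gap: its claim that all parents of the capacity-achieving sink are untrusted fails when $s$ itself is a parent of that sink (e.g., a sink fed only by parallel edges from $s$), and in that degenerate case the stated equality can itself fail. So you did not miss an idea the paper supplies; if anything, your explicit flagging of the ``$t^*$ is a child of $s$'' case identifies an issue the paper overlooks.
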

\begin{proof}
Let $(\hat{\mathcal{G}},s,\mathcal{T},\mathcal{U}) = \beta(\calN)$.
The pair of inequalities follows immediately from Definition~\ref{def:broadcast-transformation} and Theorems~\ref{thm:rate-OA} and~\ref{thm:rate-SS} by noticing that $\K_{\hat{\mathcal{G}}}(\mathcal{A},t) \leq |\mathcal{A}|$ for any $\mathcal{A} \subseteq \mathcal{U}$ and any $t \in \mathcal{T} \setminus \mathcal{A}$.

For the case $\mathcal{U} = \mathcal{V}(\mathcal{G})\setminus \{s\}$, let $t \in \mathcal{T}$ be any node satisfying $\K_{\hat{\mathcal{G}}}(s,t) = C(\beta(\mathcal{N}))$. Note that $t$ must have at least $C(\beta(\mathcal{N}))$ distinct parents in $\mathcal{G}$, all of which are untrusted. Take any $w$ of such parents to form a set $\mathcal{A}$. Then $\K_{\hat{\mathcal{G}}}(\mathcal{A},t) = w$, which shows that both inequalities can be met with equality.
\end{proof}
\medskip

In general, applying a broadcast transformation may reduce
$C(\beta(\mathcal{N}))$, the multicast capacity of the resulting
network. Still, the reduction in the jamming capability of the adversary
may compensate for this loss and yield a higher achievable rate. This
trade-off, which is captured by Proposition~\ref{prop:broadcast-rates},
will be shown to be indeed favorable in certain meaningful situations.
More specifically, we are interested in studying networks for which
$C(\beta(\mathcal{N}))$ is equal or approximately equal to
$C(\mathcal{N})$. If this is the case, we will say that $\mathcal{N}$ is
a \emph{robust} network.

In the remainder of the paper, we restrict attention to the case
$\mathcal{U} = \mathcal{V}(\mathcal{G})\setminus \{s\}$, where all
non-source nodes are untrusted. This case not only has analytical
advantages, but also seems to be the case of most practical relevance.

For a multicast network $\mathcal{N} =(\mathcal{G},s,\mathcal{T})$, define
\begin{equation*}
  \Lambda(\mathcal{N}) \triangleq \min_{t \in
  \mathcal{T}}\,\lambda_\mathcal{G}(s,t).
\end{equation*}
The following theorem shows that the multicast capacity of a broadcast-transformed network has a nice graph-theoretical characterization in terms of the original network.

\begin{theorem}\label{thm:rate-after-broadcast}
Let $\mathcal{N} =(\mathcal{G},s,\mathcal{T},\mathcal{U})$ be an
untrusted multicast network with
$\mathcal{U} = \mathcal{V}(\calG) \setminus \{s\}$.
Then
\begin{equation}\nonumber
  C(\beta(\mathcal{N})) = \Lambda(\mathcal{N}).
\end{equation}
\end{theorem}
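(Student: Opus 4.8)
The plan is to prove the sharper, per-sink identity
\[
  \K_{\hat{\mathcal{G}}}(s,t) = \lambda_{\mathcal{G}}(s,t)
  \qquad\text{for every } t \in \mathcal{T},
\]
where $\hat{\mathcal{G}}$ is the graph of $\beta(\mathcal{N})$; minimizing over $t \in \mathcal{T}$ then gives $C(\beta(\mathcal{N})) = \min_t \K_{\hat{\mathcal{G}}}(s,t) = \min_t \lambda_{\mathcal{G}}(s,t) = \Lambda(\mathcal{N})$. So I fix a sink $t$ and first read off the local structure of $\hat{\mathcal{G}}$ from Definition~\ref{def:broadcast-transformation}. Since $\mathcal{U} = \mathcal{V}\setminus\{s\}$, every non-source node $u$ is split into an \emph{in-node} $u$ (inheriting all of $u$'s original incoming edges) and an \emph{out-node} $u^+$ (inheriting all of $u$'s original outgoing edges), joined by the single edge $(u,u^+,1)$, while the source $s$ is left intact. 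The two structural facts I will use repeatedly are that each in-node $u$ has exactly one outgoing edge, namely $(u,u^+,1)$, and each out-node $u^+$ has exactly one incoming edge, again $(u,u^+,1)$. This is precisely the classical vertex-splitting gadget that trades internal-vertex-disjointness for edge-disjointness, here applied to every node except $s$.

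Next I would establish $\lambda'_{\hat{\mathcal{G}}}(s,t) = \lambda_{\mathcal{G}}(s,t)$ by exhibiting a bijection between path systems. In the forward direction, a path $s, v_1, \dots, v_{k-1}, t$ in $\mathcal{G}$ lifts to $s \to v_1 \to v_1^+ \to \cdots \to v_{k-1}^+ \to t$ in $\hat{\mathcal{G}}$, terminating at the in-node $t$ (so that $t^+$ and the edge $(t,t^+,1)$ are never traversed). I claim internally-disjoint paths lift to edge-disjoint paths: two lifts can meet only on a common edge, and meeting on a split edge $(u,u^+,1)$ forces both original paths through the internal node $u$, while meeting on a transformed edge $(u^+,v,i)$ or on a preserved edge leaving $s$ likewise forces a shared internal node, contradicting internal-disjointness. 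In the reverse direction, the unique-out-edge and unique-in-edge properties force every $s$-to-$t$ path in $\hat{\mathcal{G}}$ into the alternating form $s \to v_1 \to v_1^+ \to \cdots \to t$, which projects back to the $\mathcal{G}$-path $s, v_1, \dots, t$; edge-disjoint paths project to internally-disjoint ones, since any shared internal node $v$ would compel both paths to use its unique out-edge $(v,v^+,1)$. Hence the two counts coincide.

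Finally I would invoke the edge version of Menger's Theorem (Theorem~\ref{thm:menger-edge}) in $\hat{\mathcal{G}}$ to write $\K_{\hat{\mathcal{G}}}(s,t) = \lambda'_{\hat{\mathcal{G}}}(s,t)$, and combine this with the bijection to obtain $\K_{\hat{\mathcal{G}}}(s,t) = \lambda_{\mathcal{G}}(s,t)$, completing the argument.

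The step I expect to be the main obstacle is the careful treatment of the endpoints, and this is exactly where the path-based route pays off. Because $s$ is not split, it correctly serves as the shared source endpoint of all paths; because we only route into the in-node $t$, the fact that $t$ is itself split is irrelevant. In particular, the direct edges in $[s,t]_{\mathcal{G}}$ survive untouched in $\hat{\mathcal{G}}$ and are absorbed uniformly by the bijection, so—unlike a proof routed through the vertex version of Menger (Theorem~\ref{thm:menger-vertex}), which would require the hypothesis $|[s,t]|=0$ together with a separate treatment of the case where $t$ is a child of $s$—no case split is needed. The only genuinely delicate verification is the equivalence between edge-disjointness in $\hat{\mathcal{G}}$ and internal-vertex-disjointness in $\mathcal{G}$, which rests entirely on the unique-out-edge property of the in-nodes and the correspondingly forced alternating structure of paths in $\hat{\mathcal{G}}$.
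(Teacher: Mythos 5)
Your proposal is correct and takes essentially the same route as the paper: both establish the per-sink identity $\lambda_{\mathcal{G}}(s,t) = \lambda'_{\hat{\mathcal{G}}}(s,t) = \K_{\hat{\mathcal{G}}}(s,t)$ via the vertex-splitting correspondence between internally-disjoint paths in $\mathcal{G}$ and edge-disjoint paths in $\hat{\mathcal{G}}$, invoke the edge version of Menger's Theorem, and then minimize over $t \in \mathcal{T}$. The only difference is one of detail: you make explicit the forced alternating structure of paths in $\hat{\mathcal{G}}$ and the endpoint handling (routing into the in-node $t$, so that parallel edges in $[s,t]$ need no special case), points the paper's shorter proof leaves implicit.
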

\begin{proof}
The proof is closely related to the standard argument used to derive Theorem~\ref{thm:menger-vertex}
from the Max-flow Min-cut Theorem.

Let $\beta(\calN) = (\hat{\calG},s,\calT,\calU)$.
Since $\mathcal{U} = \mathcal{V}(\calG) \setminus \{s\}$, the broadcast
transformation replaces each non-source node by a node followed by an
edge followed by a node, as illustrated in Fig~\ref{fig:broadcast}.
Thus, if two paths in $\mathcal{G}$ are internally-disjoint, then they
will also be internally- (and therefore edge-) disjoint in
$\hat{\mathcal{G}}$. Conversely, if two paths in $\mathcal{G}$ are not
internally-disjoint, i.e., they share a node $v$, then they will also
share the two nodes $v$ and $v^+$ and the edge $(v,v^+,1)$ in
$\hat{\mathcal{G}}$, and therefore will not be edge-disjoint in
$\hat{\mathcal{G}}$. Thus, for any $t \in \calT$, the maximum number of
internally-disjoint paths from $s$ to $t$ in $\mathcal{G}$ must be equal to the maximum
number of edge-disjoint paths from $s$ to $t$ in $\hat{\mathcal{G}}$,
i.e.,
$\lambda_\calG(s,t) = \lambda'_{\hat{\calG}}(s,t) = \K_{\hat{\mathcal{G}}}(s,t)$.
The result now follows from the
definitions of $\Lambda(\calN)$ and $C(\beta(\calN))$.
\end{proof}
\medskip

We now give some examples of robust and non-robust networks.

\medskip
\begin{example}\label{eg:nonrobust}
Consider the network $\mathcal{N}$ in Fig.~\ref{fig:nonrobust_eg},
\begin{figure}
  \centering
  \includegraphics[scale=0.6]{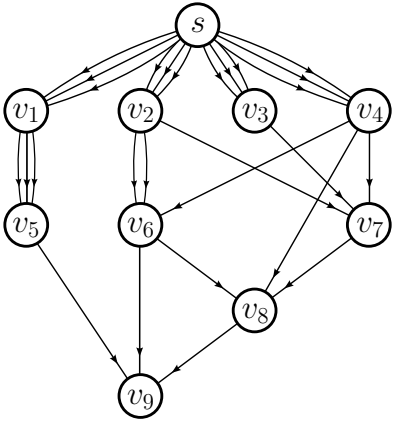}
%  \scalebox{0.6}{\input{figs/nonrobust_eg.pstex_t}}
	\caption{A non-robust %untrusted multicast
  network with $C(\mathcal{N}) = 3$
	and $C(\beta(\mathcal{N})) = 1$.}
  \label{fig:nonrobust_eg}
\end{figure}
where $s$ is the source node and all other nodes $v_1,\ldots,v_9$ are untrusted sink nodes. Note that, for any $v_i$, we have $\K(s,v_i) = 3$, and therefore  $C(\mathcal{N}) = 3$. Meanwhile, $\lambda(s,v_5) = 1$, so $C(\beta(\mathcal{N})) = \Lambda(\mathcal{N}) = 1$. Thus, $\mathcal{N}$ is not a robust network.
\end{example}

\medskip
\begin{example}\label{eg:robust}
  To make the network in Fig.~\ref{fig:nonrobust_eg} robust, we can increase the diversity of internally-disjoint paths to $v_5$ and $v_6$ by letting $v_5$ and $v_6$ have multiple parents. This may result in a network $\mathcal{N}$ as shown in Fig.~\ref{fig:robust_eg2}.
\begin{figure}
  \centering
  \includegraphics[scale=0.6]{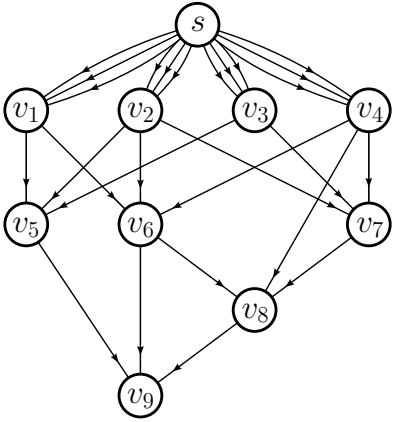}
%  \scalebox{0.6}{\input{figs/robust_eg.pstex_t}}
	\caption{A robust %untrusted multicast
  network with $C(\beta(\mathcal{N})) = C(\mathcal{N}) = 3$.}
  \label{fig:robust_eg2}
\end{figure}
  Now, for all $i$, we have $\K(s,v_i) = 3$ and $\lambda(s,v_i) = 3$.  Thus $C(\mathcal{N}) = 3$ and $C(\beta(\mathcal{N})) = \Lambda(\mathcal{N}) = 3$. Therefore, $\mathcal{N}$ is a robust network.
\end{example}

\section{$d$-diverse Networks}
\label{sec:d-diverse-networks}
In this section, we study a special class of networks, which we call \emph{$d$-diverse
networks}, that have simultaneously good capacity and robustness properties.
This class of networks is motivated by the notion of parent diversity illustrated in Example~\ref{eg:robust}.

\medskip
\begin{definition}\label{def:d-diverse}
Let $\mathcal{N} = (\mathcal{G},s,\mathcal{T})$ be an acyclic multicast network. The \emph{(parent) diversity} of a non-source node $v \in \calV(\calG)\setminus \{s\}$ is defined as
\begin{equation}\nonumber
  d(v) \triangleq |\Gamma^-(v)\setminus\{s\}| + |[s,v]|.
\end{equation}
The \emph{(parent) diversity} of $\calN$ is defined as
\begin{equation}\nonumber
  d(\calN) \triangleq \min_{v \in \calV(\calG)\setminus \{s\}}\, d(v).
\end{equation}
If $d(\calN) = d$, then $\calN$ is called a \emph{$d$-diverse network}.
\end{definition}
\medskip

For any node that is nonadjacent to the source node, the parent diversity is exactly the cardinality of its parent set. For a node that is adjacent to the source node, this interpretation remains true if we replace each edge coming from the source node by an edge followed by a node followed by an edge. This slight twist in the definition is required due to the special role that a source node has in a network problem.

The following is the main result of this section.

\medskip
\begin{theorem}\label{thm:d-diverse-lambda}
Let $\calN = (\calG,s,\calT)$ be an acyclic network. Then
\begin{equation}\nonumber
  \Lambda(\mathcal{N}) \geq d(\calN).
\end{equation}
In particular, if $\indegree(t) = d(\calN)$ for some $t \in \calT$, then
\begin{equation}\nonumber
%  d(\calN) = \Lambda(\calN) = C(\mathcal{N}).
  C(\calN) = \Lambda(\calN) = d(\mathcal{N}).
\end{equation}
\end{theorem}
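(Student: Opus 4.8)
The plan is to establish the inequality $\Lambda(\calN) \ge d(\calN)$ first, and then to extract the two equalities from standard bounds. For the inequality it suffices to show $\lambda_\calG(s,t) \ge d(\calN)$ for every sink $t \in \calT$, since taking the minimum over $t$ gives the claim. To invoke Menger's vertex theorem (Theorem~\ref{thm:menger-vertex}), whose hypothesis $|[s,t]|=0$ may fail, I would first pass to an auxiliary graph $\calG'$ obtained by \emph{subdividing each source edge}: replace every $(s,v,i) \in [s,\calV]$ by a fresh node $m$ together with the edges $(s,m,1)$ and $(m,v,1)$. This is precisely the ``twist'' anticipated after Definition~\ref{def:d-diverse}. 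In $\calG'$ no sink is adjacent to $s$, so Menger's vertex theorem applies to every pair $(s,t)$; moreover subdivision puts internally-disjoint $s,t$-paths of $\calG$ in bijection with those of $\calG'$, and $|\Gamma^-_{\calG'}(v)| = d_\calG(v)$ for every original node $v$ (each non-source parent survives, and each source edge contributes its own new parent $m$). Hence $\lambda_\calG(s,t) = \lambda_{\calG'}(s,t) = \kappa_{\calG'}(s,t)$, and it remains to prove that every $s,t$-vertex cut of $\calG'$ has at least $d(\calN)$ elements.

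This cut bound is the heart of the argument and the step I expect to be the main obstacle, since it is where acyclicity must be used. Given an arbitrary $s,t$-vertex cut $\calS$ of $\calG'$, let $R$ be the set of nodes reachable from $s$ in $\calG' - \calS$ and set $U = \calV(\calG') \setminus (\calS \cup R)$. Because $\calS$ is a cut we have $t \in U$, so $U \ne \emptyset$; and every parent of a node $u \in U$ must lie in $\calS \cup U$, since a parent in $R$ would render $u$ reachable. Now I would fix a topological ordering of the acyclic graph $\calG'$ and choose $u^\ast \in U$ that is minimal in this order. Its parents are topologically earlier and therefore cannot lie in $U$, so all of them lie in $\calS$. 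Note also that $u^\ast$ cannot be one of the new nodes $m$ (whose sole parent $s$ lies in $R$, forcing $m \in R$), so $u^\ast$ is an original node and is not adjacent to $s$ in $\calG'$. Consequently
\begin{equation}\nonumber
  |\calS| \;\ge\; |\Gamma^-_{\calG'}(u^\ast)| \;=\; d_\calG(u^\ast) \;\ge\; d(\calN),
\end{equation}
which yields $\kappa_{\calG'}(s,t) \ge d(\calN)$ and hence $\Lambda(\calN) \ge d(\calN)$.

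For the second statement I would assemble two elementary bounds. First, distinct internally-disjoint paths are in particular edge-disjoint, so Theorem~\ref{thm:menger-edge} gives $\lambda_\calG(s,t) \le \lambda'_\calG(s,t) = \K_\calG(s,t)$ for every $t$, whence $\Lambda(\calN) \le C(\calN)$. Second, the set $[\calV(\calG)\setminus\{t_0\},\, t_0]$ of all edges entering $t_0$ is an $s,t_0$-edge cut, so $\K_\calG(s,t_0) \le \indegree(t_0)$. Under the hypothesis $\indegree(t_0) = d(\calN)$ these combine with the inequality just proved into the chain
\begin{equation}\nonumber
  d(\calN) \;\le\; \Lambda(\calN) \;\le\; C(\calN) \;\le\; \K_\calG(s,t_0) \;\le\; \indegree(t_0) \;=\; d(\calN),
\end{equation}
which forces equality throughout and gives $C(\calN) = \Lambda(\calN) = d(\calN)$, as required.
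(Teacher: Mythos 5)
Your proof is correct, and although it shares the paper's decisive combinatorial step, it reaches that step by a genuinely different route. The paper first establishes Lemma~\ref{lem:minimal-vertex-cut}, representing every minimal $s,t$-vertex cut as $\tail([\calS,\bar{\calS}])$ for an $s,t$-edge cut, so that $\lambda_\calG(s,t)$ becomes the minimization in (\ref{eq:lambda-min-tail}); it then takes the topologically first node $u$ of $\bar{\calS}$, whose parents all lie in the tail and which satisfies $|[s,u]|=0$, and it disposes of sinks adjacent to $s$ in a separate case: delete the $m=|[s,t]|$ parallel edges, apply the argument to the reduced network (whose diversity is at least $d(\calN)-m$), and restore the $m$ single-edge paths. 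You avoid both the lemma and the case split: subdividing every source edge --- which literally realizes the ``twist'' remarked after Definition~\ref{def:d-diverse} --- makes all sinks nonadjacent to $s$ at once, so Theorem~\ref{thm:menger-vertex} applies uniformly, and you then lower-bound an \emph{arbitrary} vertex cut $\calS$ directly via the reachability partition into $R$, $\calS$, $U$ combined with the same topological-minimality argument (the first node of $U$ has all parents in $\calS$, cannot be a subdivision node since its sole parent $s$ lies in $R$, and hence contributes at least $d(\calN)$ parents). Notably, your partition is the same device the paper uses \emph{inside} the proof of its lemma, run in the opposite direction: there it serves to represent minimal cuts as tails of edge cuts, here it bounds all cuts from below. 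What the paper's route buys is the reusable identity (\ref{eq:lambda-min-tail}); what yours buys is a self-contained, case-free argument, at the cost of building the auxiliary graph and verifying that subdivision preserves $\lambda$ (your verification is sound: in an acyclic graph, distinct internally-disjoint $s,t$-paths use distinct source edges, so their subdivision nodes remain distinct) and converts diversity into indegree, $|\Gamma^-_{\hat{\calG}}(v)| = d(v)$. Your closing chain $d(\calN) \le \Lambda(\calN) \le C(\calN) \le \K_\calG(s,t_0) \le \indegree(t_0) = d(\calN)$ is the same argument the paper compresses into its final sentence.
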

\medskip

Theorem~\ref{thm:d-diverse-lambda} shows that, for large enough $d$, a
$d$-diverse network not only has good multicast capacity but is also
robust. In particular, when designing a network, one might focus solely
on achieving high parent diversity, obtaining good capacity and
robustness as natural consequences. It is important to note that, while
$C(\calN)$ and $\Lambda(\calN)$ are global parameters of the network,
the diversity $d(\calN)$ (or rather $d(v)$ for each node $v$) is a
parameter that depends only on local information available at a node.
Therefore, it should be relatively easy to construct a $d$-diverse
network by enforcing $d(v) \geq d$ at each node. This is indeed the case for the class of JLC networks, as discussed later in Example~\ref{eg:JLC}.

In order to prove Theorem~\ref{thm:d-diverse-lambda}, we start with a
lemma that characterizes minimal vertex cuts in a graph.

\medskip
\begin{lemma}\label{lem:minimal-vertex-cut}
Consider a graph $\mathcal{G} = (\mathcal{V},\mathcal{E})$ with
nonadjacent nodes $s$ and $t$. Then every minimal $s,t$-vertex cut is given by $\tail([\mathcal{S},\bar{\mathcal{S}}])$ for some $s,t$-edge cut $[\mathcal{S},\bar{\mathcal{S}}]$.
In particular,
\begin{equation}\label{eq:lambda-min-tail}
    \lambda_\mathcal{G}(s,t) = \min_{[\calS,\bar{\calS}]}\, |\tail([\mathcal{S},\bar{\mathcal{S}}])|
\end{equation}
where the minimization is taken over all $s,t$-edge cuts
$[\mathcal{S},\bar{\mathcal{S}}]$ such that
$s \not\in \tail([\mathcal{S},\bar{\mathcal{S}}])$.
\end{lemma}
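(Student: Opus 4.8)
The plan is to prove the lemma by exhibiting an explicit correspondence, via the map $[\mathcal{S},\bar{\mathcal{S}}] \mapsto \tail([\mathcal{S},\bar{\mathcal{S}}])$, between those $s,t$-edge cuts $[\mathcal{S},\bar{\mathcal{S}}]$ with $s \notin \tail([\mathcal{S},\bar{\mathcal{S}}])$ and the $s,t$-vertex cuts. Concretely, I would establish two facts: (i) for \emph{every} such edge cut, the set $\tail([\mathcal{S},\bar{\mathcal{S}}])$ is an $s,t$-vertex cut; and (ii) \emph{every} minimal $s,t$-vertex cut is realized as $\tail([\mathcal{S},\bar{\mathcal{S}}])$ for some edge cut of this kind. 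The displayed formula (\ref{eq:lambda-min-tail}) then follows by combining these two facts with the vertex version of Menger's Theorem (Theorem~\ref{thm:menger-vertex}).

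For direction (i), given an $s,t$-edge cut with $s \notin \tail([\mathcal{S},\bar{\mathcal{S}}])$, set $\mathcal{C} = \tail([\mathcal{S},\bar{\mathcal{S}}])$. The tails of the cut edges lie in $\mathcal{S}$, so $\mathcal{C} \subseteq \mathcal{S}$; since $s \notin \mathcal{C}$ by hypothesis and $t \in \bar{\mathcal{S}}$ forces $t \notin \mathcal{C}$, we get $\mathcal{C} \subseteq \mathcal{V} \setminus \{s,t\}$, as a vertex cut requires. To see that $\mathcal{C}$ disconnects $s$ from $t$, I would note that any $s$-$t$ path begins in $\mathcal{S}$ and ends in $\bar{\mathcal{S}}$ and hence must traverse some edge of $[\mathcal{S},\bar{\mathcal{S}}]$; the tail of its first crossing edge lies in $\mathcal{C}$ and is distinct from $s$ (as $s \notin \mathcal{C}$) and from $t$, so it is an internal vertex of the path, and its removal destroys the path.

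Direction (ii) is the main obstacle, and the place where minimality of the cut is indispensable. Given a minimal $s,t$-vertex cut $\mathcal{C}$, I would let $\mathcal{R}$ be the set of vertices reachable from $s$ in $\mathcal{G} - \mathcal{C}$ and put $\mathcal{S} = \mathcal{R} \cup \mathcal{C}$; then $s \in \mathcal{S}$ and, since $t$ is unreachable and $t \notin \mathcal{C}$, we have $t \in \bar{\mathcal{S}}$, so $[\mathcal{S},\bar{\mathcal{S}}]$ is a genuine $s,t$-edge cut. The inclusion $\tail([\mathcal{S},\bar{\mathcal{S}}]) \subseteq \mathcal{C}$ is a short reachability argument: a cut edge whose tail lay in $\mathcal{R}$ would render its head reachable from $s$ while avoiding $\mathcal{C}$, contradicting that the head lies in $\bar{\mathcal{S}}$. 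The reverse inclusion $\mathcal{C} \subseteq \tail([\mathcal{S},\bar{\mathcal{S}}])$ is the delicate step: for each $c \in \mathcal{C}$, minimality guarantees that $\mathcal{C} \setminus \{c\}$ is not a cut, yielding a simple $s$-$t$ path meeting $\mathcal{C}$ only at $c$. Walking this path from $c$ toward $t$, one reaches a first edge crossing into $\bar{\mathcal{S}}$; by the inclusion just proved its tail lies in $\mathcal{C}$, and since $c$ is the only vertex of this path in $\mathcal{C}$, that tail must be $c$ itself. Hence the crossing occurs exactly at $c$, giving $c \in \tail([\mathcal{S},\bar{\mathcal{S}}])$.

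Finally, for the formula, every edge cut entering the minimization produces, by (i), an $s,t$-vertex cut of the same cardinality, so the right-hand side of (\ref{eq:lambda-min-tail}) is at least $\kappa(s,t)$. Conversely, a minimum-cardinality vertex cut is inclusion-minimal, so by (ii) it equals $\tail([\mathcal{S},\bar{\mathcal{S}}])$ for some admissible edge cut (necessarily with $s \notin \tail$, as its tail set is a vertex cut), attaining $|\tail([\mathcal{S},\bar{\mathcal{S}}])| = \kappa(s,t)$ and giving the reverse inequality. Since $\lambda_\mathcal{G}(s,t) = \kappa(s,t)$ by Theorem~\ref{thm:menger-vertex}, the two bounds combine to yield the stated identity.
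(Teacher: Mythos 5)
Your proof is correct and follows essentially the same approach as the paper's: both show that the tail set of any admissible edge cut is an $s,t$-vertex cut, both realize a minimal vertex cut $\mathcal{C}$ as $\tail([\mathcal{S},\bar{\mathcal{S}}])$ by taking $\mathcal{S}$ to be the $s$-side of $\mathcal{G}-\mathcal{C}$ together with $\mathcal{C}$ itself, and both finish with the vertex version of Menger's Theorem. The only difference is in the reverse inclusion $\mathcal{C} \subseteq \tail([\mathcal{S},\bar{\mathcal{S}}])$: your per-vertex argument (extracting, for each $c \in \mathcal{C}$, a path meeting $\mathcal{C}$ only at $c$) is valid, but the paper gets this more cheaply by noting that $\tail([\mathcal{S},\bar{\mathcal{S}}])$ is itself an $s,t$-vertex cut contained in $\mathcal{C}$, so minimality of $\mathcal{C}$ forces equality at once.
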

\begin{proof}
First, note that if $[\mathcal{S},\bar{\mathcal{S}}]$ is an $s,t$-edge cut such that $s \not\in \tail([\mathcal{S},\bar{\mathcal{S}}])$, then
$\tail([\mathcal{S},\bar{\mathcal{S}}])$ is indeed an $s,t$-vertex cut.
This follows from the fact that removing
$\tail([\mathcal{S},\bar{\mathcal{S}}])$ from $\calG$ also removes all
the edges in $[\mathcal{S},\bar{\mathcal{S}}]$.

We now show that if $\mathcal{A}$ is a minimal $s,t$-vertex cut, then
there exists some $s,t$-edge cut $[\mathcal{S},\bar{\mathcal{S}}]$
with $s \not\in \tail([\mathcal{S},\bar{\mathcal{S}}])$ such that
$\mathcal{A} = \tail([\mathcal{S},\bar{\mathcal{S}}])$. For this, consider the graph $\calG - \calA$. Since $\mathcal{A}$ is an $s,t$-vertex cut, the graph $\mathcal{G} - \mathcal{A}$ has two components. Let $\mathcal{A}_{s}$ and
$\mathcal{A}_{t}$ be the components that contain $s$ and $t$,
respectively. Let $\mathcal{S} = \mathcal{A}_s \cup \mathcal{A}$; then
$\bar{\mathcal{S}} = \mathcal{A}_t$. Note that
$[\mathcal{S},\bar{\mathcal{S}}]$ is an $s,t$-edge cut. Moreover,
$\tail([\mathcal{S},\bar{\mathcal{S}}]) \subseteq \mathcal{A}$,
otherwise $\mathcal{A}$ would not separate $s$ and $t$. Since
$\tail([\mathcal{S},\bar{\mathcal{S}}])$ is also an $s,t$-vertex cut and
$\mathcal{A}$ is minimal, we conclude that
$\tail([\mathcal{S},\bar{\mathcal{S}}]) = \mathcal{A}$. In addition, we
must have $[\mathcal{S},\bar{\mathcal{S}}] \cap [s,\mathcal{V}] =
\emptyset$, otherwise $s \in \mathcal{A}$, which is impossible by the
definition of an $s,t$-vertex cut.

Now the result follows immediately from
Theorem~\ref{thm:menger-vertex}.
\end{proof}
\medskip

We can now give a proof of Theorem~\ref{thm:d-diverse-lambda}.

\medskip
\begin{proof}[Proof of Theorem~\ref{thm:d-diverse-lambda}]
Let $t \in \mathcal{T}$. First, suppose $t$ is not adjacent to $s$. Let $[\mathcal{S},\bar{\mathcal{S}}]$ be some $s,t$-edge cut achieving the minimization in (\ref{eq:lambda-min-tail}). Since the graph $\calG$ is directed acyclic, it has at least one topological ordering. Let $u$ be the first node in $\bar{\calS}$ according to some topological ordering, i.e., $u \in \bar{\calS}$ is a node whose parents are all in $\calS$. We have
\begin{align}
\lambda_\mathcal{G}(s,t)
&= |\tail([\mathcal{S},\bar{\mathcal{S}}])| \nonumber \\
&\geq |\Gamma^-(u)| \nonumber \\
&\geq d(\calN) \label{eq:proof-lambda-d-diverse-3}
\end{align}
where
(\ref{eq:proof-lambda-d-diverse-3}) follows from the fact that $|[s,u]| = 0$, since $s \not\in \tail([\mathcal{S},\bar{\mathcal{S}}])$.

Now, suppose $t$ is adjacent to $s$. Let $m = |[s,t]|$. Consider a new network $\mathcal{N}' = (\mathcal{G}',s,\mathcal{T})$, where $\mathcal{G}' = \mathcal{G} - [s,t]$. Note that $d(\calN') \geq d(\calN) - m$. Using the argument above on $\mathcal{N}'$, we obtain that
\begin{equation}\nonumber
  \lambda_{\mathcal{G}'}(s,t) \geq d(\calN') \geq d(\calN) - m.
\end{equation}
Returning to the original network, we have
\begin{equation}\nonumber
  \lambda_{\mathcal{G}}(s,t) = \lambda_{\mathcal{G}'}(s,t) + m \geq d(\calN).
\end{equation}

From the above arguments, it follows that $\Lambda(\mathcal{N}) \geq d(\calN)$. The special case follows immediately since $\Lambda(\calN) \leq C(\calN) \leq \indegree(t)$, for all $t \in \calT$.
\end{proof}
\medskip

As an application of Theorem~\ref{thm:d-diverse-lambda}, consider the
case of a network in which all non-source nodes are sink nodes with
diversity exactly $d$, and such that there are no parallel edges between
nodes, except possibly emanating from the source node. Then the
multicast capacities both before and after broadcast transformation are
exactly equal to $d$. Note that, as the indegree of any non-source node
is exactly $d$, any removed edge would result in a smaller capacity.
Thus, we may conclude that, given a fixed number of edges, the network
capacity is maximized by having nodes select incoming edges from
distinct parents rather than from the same parent. This result holds
even if all non-source nodes are untrusted, provided a broadcast
transformation is performed.

\medskip
\begin{example}[JLC networks]\label{eg:JLC}
We now describe a class of networks that has not only good theoretical
properties but also potential for practical applications. The protocol
for constructing and operating these networks has been proposed by Jain,
Lov{\'a}sz and Chou \cite{Jain++2007:JLC} as a scalable and robust
solution to peer-to-peer data dissemination with network coding. We
refer to any network constructed according to their protocol as a
\emph{JLC network}.

An example of a JLC network is depicted in Fig.~\ref{fig:JLC_eg}.
\begin{figure}
    \centering
    \includegraphics[scale=0.6]{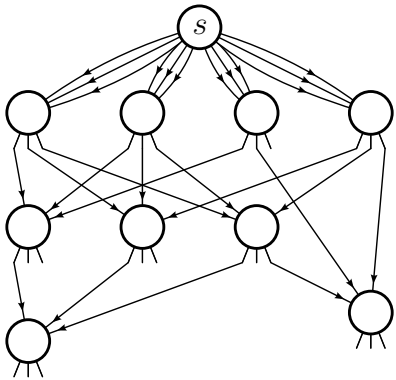}
%    \scalebox{0.60}{\input{figs/JLC_eg.pstex_t}}
    \caption{A $d$-diverse JLC network with $k=12$ and $d=3$.}
    \label{fig:JLC_eg}
\end{figure}
The network is acyclic, and all non-source nodes are sinks. Initially,
the network contains only the source node (or server), which has $k$
(potential) outgoing links. Here, each link represents a stream of unit
bandwidth. At any time, the server maintains a list of $k$ available
links for download. When a new node joins the network, it requests from
the server $d$ download links. The server randomly picks $d$ links from
the pool of available links, and updates its list with $d$ potential
links originating from the new node. Therefore, the network always has
$k$ links (i.e., streams of unit bandwidth) available for download.

It is easy to ensure that a JLC network is $d$-diverse by performing a simple protocol modification.
When a new node joins the network, rather than choosing the $d$ upstream
links completely at random from the $k$ available links (thereby
allowing the possibility of fewer than $d$ distinct parents), the server
simply needs to provide the new node with $d$ links from $d$ distinct
parents. Note that, in practice, $k \gg d^2$, so
the $k$ available links come from at
least $l=\lceil k/d \rceil \gg d$ parents.
Hence, the modification can be done easily.
\end{example}

\section{Conclusions}
\label{sec:conclusions}

We have introduced the broadcast transformation of a network, which
restricts the influence of potential adversaries by limiting them to a
single transmission opportunity per generation. For networks with a
sufficient diversity of internally-disjoint paths from source to
sink(s), the multicast capacity may not be greatly affected by this
transformation. In particular, for a class of networks called $d$-diverse networks, the full capacity is maintained when $d$ is sufficiently large.
Combined with error control for network coding, the proposed approach
may be an effective means of dealing with adversaries, particularly in
application scenarios such as real-time media streaming, where
alternative (e.g., cryptographic) methods may be cost-prohibitive.

\section*{Acknowledgements}

The authors would like to thank the anonymous reviewers for their helpful comments, which significantly improved the presentation of the paper.

\begin{IEEEbiographynophoto}{Da Wang}
received the B.A.Sc (Hons.) degree in electrical engineering from the University of Toronto, Toronto, ON, Canada, in 2008.

He is currently working toward the M.S. degree in the Department of Electrical Engineering and Computer Science (EECS) at the Massachusetts Institute of Technology (MIT), Cambridge. His research interests lie in the areas of communication and information theory.
\end{IEEEbiographynophoto}

\begin{IEEEbiographynophoto}{Danilo Silva}
(S'06--M'09) received the B.Sc. degree from the Federal University of Pernambuco, Recife, Brazil, in 2002, the M.Sc. degree from the Pontifical Catholic University of Rio de Janeiro (PUC-Rio), Rio de Janeiro, Brazil, in 2005, and the Ph.D. degree from the University of Toronto, Toronto, Canada, in 2009, all in electrical engineering.

From September to October 2009, he was a Postdoctoral Fellow with the Ecole Polytechnique F\'{e}d\'{e}rale de Lausanne (EPFL), and from October to December 2009, he was a Postdoctoral Fellow with the University of Toronto. He is currently a Postdoctoral Fellow with the State University of Campinas (Unicamp). His research interests include channel coding, information theory, and network coding.
\end{IEEEbiographynophoto}

\begin{IEEEbiographynophoto}{Frank R. Kschischang}
(S'83--M'91--SM'00--F'06) received the B.A.Sc. degree (with honors) from the University of British Columbia, Vancouver, BC, Canada, in 1985 and the M.A.Sc.\  and Ph.D.\ degrees from the University of Toronto, Toronto, ON, Canada, in 1988 and 1991, respectively, all in electrical engineering.  He is a Professor of Electrical and Computer Engineering and Canada Research Chair in Communication Algorithms at the University of Toronto, where he has been a faculty member since 1991. During 1997-98, he was a visiting scientist at MIT, Cambridge, MA and in 2005 he was a visiting professor at the ETH, Zurich.  His research interests are focused on the area of channel coding techniques.

He is the recipient of the Ontario Premier's Research Excellence Award, a Canada Council of the Arts Killam Research Fellowship, and (with R. Koetter) the IEEE Communications Society and Information Theory Society Joint Paper Award.

During 1997-2000, he served as an Associate Editor for Coding Theory for the IEEE TRANSACTIONS ON INFORMATION THEORY. He also served as technical program co-chair for the 2004 IEEE International Symposium on Information Theory (ISIT), Chicago, and as general co-chair for ISIT 2008, Toronto.  He serves as the 2010 President of the IEEE Information Theory Society.
\end{IEEEbiographynophoto}

\end{document}